\newtheorem{theoreme}{Theorem}[section]
\newtheorem{corollaire}{Corollary}
\newtheorem{lemme}{Lemma}
\newtheorem*{acknowledgements}{Acknowledgements}
\newcommand{\C}{\mathbb{C}ov}
\newcommand{\E}{\mathbb{E}}
\newcommand{\V}{\mathbb{V}}
\newcommand{\R}{\mathbb{R}}
\newcommand{\p}{\partial}
\newcommand{\X}{\mathcal{X}}
\newcommand{\hpsi}{\theta}
\begin{document}
\date{}
\title{ Propagation of initial errors on the parameters for linear and Gaussian state space models}
\author{Salima El Kolei}
\maketitle
\footnotetext[1]{
Affiliation: S. El Kolei,
              Laboratoire de Math\'{e}matiques J.A. Dieudonn\'{e}\\
              UMR n 7351 CNRS UNSA\\
              Universit\'{e} de Nice - Sophia Antipolis\\
              06108 Nice Cedex 02 France \\
              Tel.: +33-04-92-07-62-56\\
 \href{salima@unice.fr}{salima@unice.fr}
}

\begin{abstract}
For linear and Gaussian state space models parametrized by $\theta_0  \in \Theta \subset \R^{r}, r \geq 1$ corresponding to the vector of parameters of the model, the Kalman filter gives exactly the solution for the optimal filtering under weak assumptions. This result supposes that $\theta_0$ is perfectly known. In most real applications, this assumption is not realistic since $\theta_0$ is unknown and has to be estimated. In this paper, we analysis the Kalman filter for a biased estimator $\theta$ of $\theta_0$. We show the propagation of this bias on the estimation of the hidden state. We give an expression of this propagation for linear and Gaussian state space models and we extend this result for almost linear models estimated by the Extended Kalman filter. An illustration is given for the autoregressive process with measurement noises widely studied in econometrics to model economic and financial data.\\

\textbf{Keywords:} Kalman filter,  Extended Kalman filter, State space models, Autoregressive process
\end{abstract}

\section{Introduction}
\label{intro}

Let $(\Omega,\mathcal{F},\mathbb{P}_{\theta_0})$ be a probability space parametrized by $\theta_0  \in \Theta \subset \R^{r}, r \geq 1$ corresponding to the vector of parameters of the model. We define two real vectors $\left\{x_t, t\in \mathbb{N}\right\}$ defined on $(\Omega,\mathcal{F},\mathbb{P}_{\theta_0})$ with value in $\X$ and $\left\{y_t, t\in \mathbb{N}^{*}\right\}$ defined on $(\Omega,\mathcal{F},\mathbb{P}_{\theta_0})$ with value in $\mathcal{Y}$. The process $\left\{x_t, t\in \mathbb{N}\right\}$ (respectively $\left\{y_t, t\in \mathbb{N}^{*}\right\}$) is called the unobserved signal process (resp. the observation process).\\ 

The Kalman filter (KF) and the Extended Kalman filter (EKF) commonly used in some engineering applications have been successfully employed in various areas. These filters may be easily understood by reading the first publication of Kalman in 1960 \cite{Ka} or the Bayesian interpretation of Harrison and Stevens in 1971 \cite{harisson}.

\subsection{The (Extended) Kalman filter: motivation}

Let $y_1, \cdots, y_T$ be the data (which may be either a scalar or a vector) at time $1,\cdots,T$. We assume that $y_t$ depends on the unobservable variable $x_t$. The aim of the (Extended) Kalman filter is to make inference about the hidden state $x_t$ (which may also be a scalar or a vector) conditionally to the data $y_1,\cdots, y_t$. The relationship between the observed variable $y_t$ and the hidden state $x_t$ is linear and described by a function $h$ depending on the unknown vector of parameters $\theta_0$. This relation is specified by the following observation equation:

\begin{equation*}
y_{t}=h(\theta_0, x_{t})+ \sigma^{\varepsilon}_{\theta_0}\varepsilon_t, \qquad t\geq 1
\end{equation*} 
where $\varepsilon_t$ is the vector of noises assumed to be normally distributed with mean zero and unit variance, denoted as: $\varepsilon_t \sim \mathcal{N}(0, I_{n_y\times n_y})$ where $n_y$ is the dimension of the observation space $\mathcal{Y}$. \\
The hidden state $x_t$ is assumed to be varying with time and its dynamic feature is given by the following state equation:

\begin{equation*}
x_t=b(\theta_0, x_{t-1})+\sigma^{\eta}_{\theta_0}\eta_t
\end{equation*} 
where $b$ is a known function and $\eta_t$ is the state error assumed to be normally distributed with mean zero and unit variance, i.e $\eta_t \sim \mathcal{N}(0, I_{n_x\times n_x})$ where $n_x$ is the dimension of the state space $\mathcal{X}$.\\
In addition to the usual Kalman filter assumptions (see \cite{Ka}), we also assume that the noises $\varepsilon_t$ and $\eta_t$ are independent. \\
Hence, this paper is concerned with the following discrete time state space model with additives noises:

\begin{equation}\label{kf_erreur}
\left\lbrace\begin{array}{ll}
y_{t}=h(\theta_0, x_{t})+ \sigma^{\varepsilon}_{\theta_0}\varepsilon_t, \qquad t\geq 1\\
x_t=b(\theta_0, x_{t-1})+\sigma^{\eta}_{\theta_0}\eta_t
\end{array}
\right.
\end{equation} 

Under the usual Kalman assumptions, the model (\ref{kf_erreur}) can be rewritten as follows:

\begin{equation}\label{linearisemodel}
\left\lbrace\begin{array}{ll}
y_{t}=d_{t}(\theta_0)+C_{\theta_0}x_{t}+\sigma^{\varepsilon}_{\theta_0}\varepsilon_{t}\quad t\geq 1,\\
x_{t}=u_{t}(\theta_0)+A_{\theta_0}x_{t-1}+\sigma^{\eta}_{\theta_0}\eta_{t}, 
\end{array}
\right.
\end{equation}

$\newline$
If the vector of parameters $\theta_0$ is perfectly known, the optimal filtering $p_{\theta_0}(x_t|y_{1:t})$ is Gaussian and the Kalman filter gives exactly the two first conditional moments: $\hat{x}_{t}=\E_{\theta_0}[x_t \vert y_{1:t}]$ and $P_t=\E_{\theta_0}[(x_t-\hat{x}_{t})(x_t-\hat{x}_{t})' \vert y_{1:t}]$ where $^{'}$ stands for the transpose. In particular, the Kalman filter estimator is the \emph{BLUE} (Best Linear and Unbiased Estimator) among linear estimators. Nevertheless, in most applications the linearity assumption of the functions $h$ and $b$ is not always satisifed. A linearization by a one order Taylor series expansion can be performed and the Extended Kalman filter consists in applying the Kalman filter on this linearized model.\\
For the EKF, the matrix $C_{\theta_0}$ is the differential of the function $h$ with respect to (w.r.t.) $x$ computed at the point $(\theta_0, \hat{x}_t^{-})$ where $\hat{x}_t^{-}$ corresponds to the conditional expectation $\E_{\theta_0}[x_t \vert y_{1:t-1}]$. Additionally, the matrix $A_{\theta_0}$ is the differential of the function $b$ w.r.t. $x$ computed at the point $(\theta_0, \hat{x}_{t-1})$. Furthermore, the functions $u_t(\theta_0)$ and  $d_t(\theta_0)$ are defined as:

\begin{equation*}
\left\lbrace\begin{array}{ll}
u_{t}(\theta_0)=b(\theta_0, \hat{x}_{t-1})-A_{\theta_0}\hat{x}_{t-1}\\
d_{t}(\theta_0)=h(\theta_0, \hat{x}^{-}_{t})-C_{\theta_0}\hat{x}^{-}_{t}
\end{array}\right.
\end{equation*}

In this paper, we assume that the vector of parameters $\theta_0$ is not perfectly known such that the inference of the hidden state $x_t$ conditionnally to $y_{1:t}$ is made with errors of specification. This typical case is frequent in practice since in general the vector of parameters is unknown and need to be estimated by an ordinary method. The resulting estimator can be biased and consequently this bias is propagated on the estimation of the hidden state. More precisely, if we denote by $\hat{\theta}$ a biased estimator of $\theta_0$ such that $\E_{\theta_0}[\hat{\theta}]=\theta=\theta_0+\epsilon$ where $\epsilon$ is a fixed and unknown error corresponding to the bias, we want to evaluate the propagation of the error a posteriori and of the residues a posteriori given by:
\begin{eqnarray}\label{def}
e_{t}=x_{t}-\E_{\theta}[x_t \vert y_{1:t}] \\
\xi_{t}=y_{t}-\E_{\theta}[y_t \vert y_{1:t}].
\end{eqnarray}

Many papers concerned the propagation of the initial error on the state $(x_{0}-\hat{x}_{0})$ through the filter, and, to the best of our knowledge, there don't exist in the literature, an analysis of the propagation of the initial errors on the vector of parameters. 
In this paper, we derive an expression of these propagations for the Kalman and the Extended Kalman filters. Our main result shows that a correlation between the error a posteriori $e_{t}$ and the unobserved state $x_t$ appeared at each time $t$ of the filter. The Kalman filter is now a biased estimator and a new Lyapunov dynamic equation for the variance matrix $P_t$ is induced. \\

Applications of this result include epidemiology, meteorology, neuroscience, ecology (see \cite{MR2850220}) and finance (see \cite{johannes}). For example, our result can be applied to the five ecological state space models described in \cite{peters}.  Although the scope of our method is general, we have chosen to focus on the so-called autoregressive process AR(1) with measurement noise which has been widely studied and on which our main result can be easily applied and understood.
A full illustration of this result is given for a more complex model as the Heston model which is very used in finance for pricing options and hedging portfolios (see \cite{salima2}).\\

$\newline$
The paper is organized as follows. Section \ref{main result} presents the model assumptions and states all of the theoretical results. The application is given and discussed in Section \ref{exemples}. Some concluding remarks are provided in the last section. The proofs are gathered in Appendix \ref{appendice}. 
 
 \newpage

\section{Main result}\label{main result}

\subsection{General setting and assumptions}\label{kalmannotat}

In this section, we introduce some preliminary main notations and provide the assumptions of model (\ref{linearisemodel}).

\subsubsection{Notations}

Subsequently, we denote by $E_{t}$ the pair of the unobservable states vectors given by
                 $\left(\begin{array}{cc}
                 e_{t}\\
                 x_{t}\\
                 \end{array}
                 \right)$ and by $\mathfrak{E}_t$ the pair of the observations vectors $\left(\begin{array}{cc}
                 \xi_{t}\\
                 y_{t}\\
                 \end{array}
                 \right)$ where $e_{t}$ and $\xi_{t}$ are defined in (\ref{def}) respectively. Their variances matrix are denoted by $\Sigma^x_t$ and $\Sigma^{y}_t$ respectively.\\
                 
Regarding the partial derivatives, for any function $h$, $[\p h/\p \theta]$ is the vector of the partial derivatives of $h$ w.r.t $\theta$.\\ 
 
Finally, $R_{\theta_0}$ denotes $\sigma_{\theta_0}^{\varepsilon}\sigma_{\theta_0}^{'\varepsilon}$ and $Q_{\theta_0}$ denotes $\sigma^{\eta}_{\theta_0}\sigma^{'\eta}_{\theta_0}$ and are the covariances matrix of $\varepsilon_t$ and of $\eta_t$ respectively.

\subsubsection{Assumptions}

We consider the state space models (\ref{linearisemodel}), the following assumption ensures some smoothness for the functions $h$ and $b$.\\

 \textbf{(A1)} The functions $b$ and $h$ are differentiable with respect to $\theta_0$ and $x$. \\

\subsubsection{Main result}
Before running into the main theorem of this paper, let us explain some existing results. It is well known that if the vector of parameters is exactly known, the error a posteriori $e_t$ is given by the following formula:

\begin{equation}
e_t =(I_{n_x\times n_x}-K_t C_{\theta_0})A_{\theta_0}e_{t-1}-K_t( \sigma^{\varepsilon}_{\theta_0}\varepsilon_t+ C_{\theta_0} \sigma^{\eta_0}_{\theta_0}\eta_t)+ \sigma^{\eta}_{\theta_0}\eta_t\nonumber\\
\end{equation}
where $K_t$ is called the Kalman matrix that minimizes the variance matrix $P_t$.\\
Under some assumptions on the model (\ref{linearisemodel}), a CLT is obtained for $e_t$ as $t$ tends to infinity (see \cite{AbCm94}).
The following Theorem gives the propagation of the error a posteriori $e_t$ and of $\xi_t$ for the Kalman filter and the Extended Kalman filter when $\theta_0$ is not exactly known. In this respect, we further assume that assumption \textbf{(A1)} holds true and the usual Kalman assumptions are satisfied.

\begin{theoreme} \label{prop_ekf} Consider the model (\ref{linearisemodel}). If $\epsilon <<1$, then:

\begin{eqnarray} 
e_t &=&(I_{n_x\times n_x}-K_t C_{\theta})A_{\theta}e_{t-1}-K_t( \sigma^{\varepsilon}_{\theta}\varepsilon_t+ C_{\theta} \sigma^{\eta}_{\theta}\eta_t)+ \sigma^{\eta}_{\theta}\eta_t\nonumber\\
&&+
\mathcal{E}_{x}^{\epsilon}(\theta,t)+\mathcal{F}_{x}^{\epsilon}(\theta,t)x_{t-1}+\mathcal{W}_{x}^{\epsilon}(\theta,t)+o(\epsilon)\label{f1}
\end{eqnarray} 
with:

\begin{eqnarray} 
&&\mathcal{E}_{x}^{\epsilon}(\theta,t)=-\epsilon \left( (I_{n_x\times n_x}-K_t C_{\theta})\frac{\p u_t}{\p \theta}(\theta)-K_t\frac{\p d_t}{\p \theta}(\theta)-K_t\frac{\p C_{\theta}}{\p \theta}u_t(\theta)\right)\label{matrixA1}\\
&&\mathcal{F}_{x}^{\epsilon}(\theta,t)=-\epsilon \left( (I_{n_x\times n_x}-K_t C_{\theta})\frac{\p A_{\theta}}{\p \theta}-K_t\frac{\p C_{\theta}}{\p \theta}A_{\theta}\right)\label{matrixA2}\\
&&\mathcal{W}_{x}^{\epsilon}(\theta,t)=-\epsilon \left( \frac{\p  \sigma^{\eta}_{\theta}}{\p \theta}\eta_t-K_t C_{\theta} \frac{\p  \sigma^{\eta}_{\theta}}{\p \theta}\eta_t-K_t  \sigma^{\eta}_{\theta} \frac{\p C_{\theta}}{\p \theta}\eta_t-K_t \frac{\p  \sigma^{\varepsilon}_{\theta}}{\p \theta}\varepsilon_t\right)\label{matrixA3}
\end{eqnarray}
Additionally, the propagation of $\xi_t$ is equal to:

\begin{equation}\label{f2}
\xi_t=C_{\theta} e_{t}+  \sigma^{\varepsilon}_{\theta}\varepsilon_t+
\mathcal{E}_{y}^{\epsilon}(\theta,t)+\mathcal{F}_{y}^{\epsilon}(\theta,t)x_{t}+\mathcal{W}_{y}^{\epsilon}(\theta,t)+o(\epsilon)
\end{equation}
with:

\begin{eqnarray} \label{matrixC}
\mathcal{E}_{y}^{\epsilon}(\theta,t)=-\epsilon\frac{\p d_t}{\p \theta}(\theta),\quad \mathcal{F}_{y}^{\epsilon}(\theta,t)=-\epsilon \frac{\p C_{\theta}}{\p \theta},\quad \mathcal{W}_{y}^{\epsilon}(\theta,t)=-\epsilon\frac{\p  \sigma^{\varepsilon}_{\theta}}{\p \theta}\varepsilon_t
\end{eqnarray}

Moreover, when the linearity assumption of the funtions $b$ and $h$ is not satisfied, the formulas above remain true with the notations of the EKF defined in Section \ref{intro}.
\end{theoreme}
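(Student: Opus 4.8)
The plan is to derive an \emph{exact} recursion for $e_t$ in which every Kalman matrix is read at the filter value $\theta$, and only afterwards to Taylor-expand the parameter mismatch to first order in $\epsilon=\theta-\theta_0$. First I would write the recursions exactly as the filter runs them at $\theta$: the one-step prediction $\hat{x}_t^-=u_t(\theta)+A_\theta\hat{x}_{t-1}$, the gain $K_t=P_t^-C_\theta'(C_\theta P_t^-C_\theta'+R_\theta)^{-1}$, and the update $\hat{x}_t=\hat{x}_t^-+K_t(y_t-d_t(\theta)-C_\theta\hat{x}_t^-)$. The decisive structural fact is that the data $y_t$ are generated by the \emph{true} model (\ref{linearisemodel}) at $\theta_0$, so the only source of error is the mismatch $\theta_0$ versus $\theta$. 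Substituting the true state and observation equations into $e_t=x_t-\hat{x}_t$ and isolating $A_\theta e_{t-1}$ in the prediction error $e_t^-=x_t-\hat{x}_t^-$ via $A_{\theta_0}x_{t-1}-A_\theta\hat{x}_{t-1}=A_\theta e_{t-1}+(A_{\theta_0}-A_\theta)x_{t-1}$, and $C_\theta e_t^-$ in the update via $C_{\theta_0}x_t-C_\theta\hat{x}_t^-=C_\theta e_t^-+(C_{\theta_0}-C_\theta)x_t$, yields the exact identity
\begin{align*}
e_t&=(I-K_tC_\theta)\bigl[A_\theta e_{t-1}+(u_t(\theta_0)-u_t(\theta))+(A_{\theta_0}-A_\theta)x_{t-1}+\sigma^{\eta}_{\theta_0}\eta_t\bigr]\\
&\quad-K_t\bigl[(d_t(\theta_0)-d_t(\theta))+(C_{\theta_0}-C_\theta)x_t+\sigma^{\varepsilon}_{\theta_0}\varepsilon_t\bigr].
\end{align*}

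This identity is the heart of the argument; everything after it is a first-order expansion. Invoking \textbf{(A1)} and $\epsilon\ll1$, I would replace each mismatch by its Taylor expansion about $\theta$, using $\P(\theta_0)=\P(\theta-\epsilon)=\P(\theta)-\epsilon\,[\p \P/\p\theta](\theta)+o(\epsilon)$ for each coefficient $\P\in\{u_t,d_t,A_\theta,C_\theta,\sigma^{\eta}_\theta,\sigma^{\varepsilon}_\theta\}$. The zeroth-order part reproduces exactly the classical formula stated before the theorem, but with all matrices read at $\theta$, which is the first line of (\ref{f1}). Collecting the $O(\epsilon)$ terms then produces the correction, and the one step requiring care is the term $-K_t(C_{\theta_0}-C_\theta)x_t=\epsilon K_t[\p C_\theta/\p\theta]x_t+o(\epsilon)$: into it I would substitute the state equation $x_t=u_t+A_\theta x_{t-1}+\sigma^{\eta}_\theta\eta_t$, legitimate because the whole term is already $O(\epsilon)$ so $\theta_0$ may be replaced by $\theta$ up to $o(\epsilon)$. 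This splits the single term into a constant piece absorbed in $\mathcal{E}_{x}^{\epsilon}$, a piece proportional to $x_{t-1}$ absorbed in $\mathcal{F}_{x}^{\epsilon}$, and a noise piece absorbed in $\mathcal{W}_{x}^{\epsilon}$, which is precisely how the three families (\ref{matrixA1})--(\ref{matrixA3}) arise.

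For $\xi_t$ I would proceed analogously but more directly, reading the filtered reconstruction of the observation as $\E_{\theta}[y_t\mid y_{1:t}]=d_t(\theta)+C_\theta\hat{x}_t$, so that $\xi_t=y_t-d_t(\theta)-C_\theta\hat{x}_t$; substituting the true observation equation and splitting $C_{\theta_0}x_t-C_\theta\hat{x}_t=C_\theta e_t+(C_{\theta_0}-C_\theta)x_t$ gives a leading term $C_\theta e_t+\sigma^{\varepsilon}_\theta\varepsilon_t$. A single first-order expansion of $d_t(\theta_0)-d_t(\theta)$, of $(C_{\theta_0}-C_\theta)x_t$, and of $\sigma^{\varepsilon}_{\theta_0}\varepsilon_t$ then yields (\ref{f2}) with $\mathcal{E}_{y}^{\epsilon},\mathcal{F}_{y}^{\epsilon},\mathcal{W}_{y}^{\epsilon}$ as in (\ref{matrixC}); no re-substitution of the state equation is needed here, so this case is genuinely easier than the one for $e_t$.

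I expect the main obstacle to be not either computation in isolation but the bookkeeping that keeps the expansion consistently first order: one must verify that the true state $x_t$ is $O(1)$ in $\epsilon$ (it is, being generated at the fixed $\theta_0$), that products of two mismatch factors and all $\epsilon\cdot(\text{fluctuation})$ cross-terms are genuinely $o(\epsilon)$, and that replacing $\theta_0$ by $\theta$ inside already-$O(\epsilon)$ terms is harmless. For the EKF extension the same computation goes through verbatim once $C_\theta,A_\theta,u_t,d_t$ are reinterpreted as the Jacobians and offsets of the linearization of Section~\ref{intro}; the delicate point to check is that the linearization points $\hat{x}_t^-,\hat{x}_{t-1}$ themselves depend on $\theta$, so differentiating the linearized coefficients w.r.t.\ $\theta$ could in principle create further $O(\epsilon)$ contributions, and one must argue these are either absorbed into the stated families or fall into $o(\epsilon)$.
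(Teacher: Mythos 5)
Your proposal is correct and follows essentially the same route as the paper: write the filter's update error exactly, substitute the true state and observation equations so that the $C_{\theta_0}x_t$ contribution is re-expressed through $x_{t-1}$ and the noises, then Taylor-expand every coefficient about $\theta$ to first order in $\epsilon$ and collect the constant, $x_{t-1}$-proportional, and noise-proportional pieces into $\mathcal{E}_x^{\epsilon}$, $\mathcal{F}_x^{\epsilon}$, $\mathcal{W}_x^{\epsilon}$ (and analogously for $\xi_t$). The only difference is organizational — you isolate the mismatches $(A_{\theta_0}-A_\theta)x_{t-1}$, $(C_{\theta_0}-C_\theta)x_t$, etc.\ in a single exact identity before expanding, whereas the paper expands the prediction error and the innovation separately and then combines them — and this changes nothing of substance.
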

\begin{proof}
See Appendix (\ref{preuves1}).
\end{proof}

$\newline$
We note that the terms depending on $\epsilon$: $\mathcal{E}_{x}^{\epsilon}(\theta,t)$, $\mathcal{F}_{x}^{\epsilon}(\theta,t)$  and $\mathcal{W}_{x}^{\epsilon}(\theta,t)$ (resp. $\mathcal{E}_{y}^{\epsilon}(\theta, t)$, $\mathcal{F}_{y}^{\epsilon}(\theta,t)$ and $\mathcal{W}_{y}^{\epsilon}(\theta,t)$) are the corrective terms arising from the bias of the parameters estimates.\\
Besides, we can see in Eq.(\ref{f1}) that at time $t$, the propagation of the state error $e_{t}$ depends on $e_{t-1}$ but also on the true state variable $x_{t-1}$. Therefore, the variance of the error $e_t$ depends on the variance of $e_{t-1}$ but also on the covariance between $e_{t-1}$ and $x_{t-1}$.\\

Theorem \ref{prop_ekf} gives an expression of the error a posteriori $e_t$ and of the residues $\xi_t$ which can be rewritten as follows:

\begin{eqnarray*}
e_t&=& x_t-\E_{\theta}[x_t \vert y_{1:t}]\\
&=&\underbrace{(x_t-\E_{\theta_0}[x_t \vert y_{1:t}])}_{ \text{ (1) error of estimation}}+\underbrace{(\E_{\theta_0}[x_t \vert y_{1:t}]-\E_{\theta}[x_t \vert y_{1:t}])}_{ \text{ (2) correctives terms arising from the bias of parameters.}}\\
\end{eqnarray*} 
Additionally,
\begin{eqnarray*}
\xi_t&=& x_t-\E_{\theta}[\xi_t \vert y_{1:t}]\\
&=&\underbrace{(\xi_t-\E_{\theta_0}[\xi_t \vert y_{1:t}])}_{  \text{ (1) true residues}}+\underbrace{(\E_{\theta_0}[\xi_t \vert y_{1:t}]-\E_{\theta}[\xi_t \vert y_{1:t}])}_{\text{ (2) correctives terms arising from the bias of parameters.}}\\
\end{eqnarray*}

The expression of the correctives terms (2) are given in Corollary \ref{coroll_estierorr1}.\\

\begin{corollaire}\label{coroll_estierorr1} Let $e_t$ given in Eq.(\ref{f1}), the mean of the error a posteriori is given by:
\begin{eqnarray}
\E_{\theta_0}[e_t \vert y_{1:t}]&=&\E_{\theta_0}[x_t \vert y_{1:t}]-\E_{\theta}[x_t \vert y_{1:t}]\nonumber\\
&=&\big((I_{n_x\times n_x}-K_tC_{\theta})A_{\theta} + \mathcal{F}_{x}^{\epsilon}(\theta,t)\big)\E_{\theta_0}[e_{t-1} \vert y_{1:t-1}]\nonumber\\
&&+\mathcal{E}^{\epsilon}_x(\theta,t)+\mathcal{F}_{x}^{\epsilon}(\theta,t)\E_{\theta}[x_{t-1}\vert y_{1:t-1}]+o(\epsilon)\label{estimation_error}
\end{eqnarray}
where $\mathcal{E}^{\epsilon}_x(\theta,t)$ and $\mathcal{F}^{\epsilon}_x(\theta,t)$ are given in Eq.(\ref{matrixA1}) and Eq.(\ref{matrixA2}).\\

Besides, let $\xi_t$ given in Eq.(\ref{f2}), the mean of $\xi_t$ is given by:

\begin{eqnarray}
\E_{\theta_0}[\xi_t \vert y_{1:t}]&=&\E_{\theta_0}[\xi_t \vert y_{1:t}]-\E_{\theta}[\xi_t \vert y_{1:t}]\nonumber\\
&=&\left(C_{\theta}+\mathcal{F}_{y}^{\epsilon}(\theta,t)\right)\E_{\theta_0}[e_{t} \vert y_{1:t}]+\mathcal{E}^{\epsilon}_y(\theta,t)+\mathcal{F}_{y}^{\epsilon}(\theta,t)\E_{\theta}[x_t\vert y_{1:t}]+o(\epsilon)\label{estimation_error2}
\end{eqnarray}
where $\mathcal{E}^{\epsilon}_y(\theta,t)$ and $\mathcal{F}^{\epsilon}_y(\theta,t)$ are given in Eq.(\ref{matrixC}).
\end{corollaire}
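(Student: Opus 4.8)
The plan is to obtain both identities by taking a conditional expectation under $\mathbb{P}_{\theta_0}$ of the two expansions (\ref{f1}) and (\ref{f2}) supplied by Theorem \ref{prop_ekf}, and then reorganising the first-order terms.

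First, the two leading equalities are purely definitional. Since $\E_{\theta}[x_t \vert y_{1:t}]$ is $\sigma(y_{1:t})$-measurable, applying $\E_{\theta_0}[\,\cdot \mid y_{1:t}]$ to $e_t = x_t - \E_{\theta}[x_t \vert y_{1:t}]$ gives $\E_{\theta_0}[e_t \vert y_{1:t}] = \E_{\theta_0}[x_t \vert y_{1:t}] - \E_{\theta}[x_t \vert y_{1:t}]$, and likewise for $\xi_t$. This already exhibits the conditional error as the gap between the $\theta_0$-filter and the $\theta$-filter.

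For the recursion (\ref{estimation_error}), I would condition the expansion (\ref{f1}) on the past $y_{1:t-1}$. The coefficient matrices $K_t$, $C_\theta$, $A_\theta$ and the corrective matrices $\mathcal{E}_x^\epsilon$, $\mathcal{F}_x^\epsilon$ of (\ref{matrixA1})--(\ref{matrixA2}) are deterministic functions of $\theta$ (for the EKF they depend only on the linearisation points $\hat{x}_{t-1}$, $\hat{x}_t^{-}$, which are $\sigma(y_{1:t-1})$-measurable), so they factor out of the conditional expectation. The essential simplification is that the time-$t$ innovations $\varepsilon_t$ and $\eta_t$ are independent of $y_{1:t-1}$; hence the whole zeroth-order noise block $-K_t(\sigma^{\varepsilon}_{\theta}\varepsilon_t + C_\theta \sigma^{\eta}_{\theta}\eta_t) + \sigma^{\eta}_{\theta}\eta_t$ and the first-order block $\mathcal{W}_x^\epsilon(\theta,t)$ of (\ref{matrixA3}) have vanishing conditional mean, which is precisely why no analogue of $\mathcal{W}_x^\epsilon$ survives in (\ref{estimation_error}). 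What remains is $(I_{n_x\times n_x} - K_tC_\theta)A_\theta\,\E_{\theta_0}[e_{t-1}\vert y_{1:t-1}] + \mathcal{E}_x^\epsilon(\theta,t) + \mathcal{F}_x^\epsilon(\theta,t)\,\E_{\theta_0}[x_{t-1}\vert y_{1:t-1}]$ up to $o(\epsilon)$. Finally, splitting $\E_{\theta_0}[x_{t-1}\vert y_{1:t-1}] = \E_{\theta_0}[e_{t-1}\vert y_{1:t-1}] + \E_{\theta}[x_{t-1}\vert y_{1:t-1}]$ (again from the definition of $e_{t-1}$) regroups the two $e_{t-1}$-terms into the combined coefficient $(I_{n_x\times n_x}-K_tC_\theta)A_\theta + \mathcal{F}_x^\epsilon(\theta,t)$ and leaves the residual term $\mathcal{F}_x^\epsilon(\theta,t)\,\E_{\theta}[x_{t-1}\vert y_{1:t-1}]$, which is exactly (\ref{estimation_error}). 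The residue identity (\ref{estimation_error2}) is obtained identically from (\ref{f2}): the noise $\sigma^{\varepsilon}_{\theta}\varepsilon_t$ and $\mathcal{W}_y^\epsilon$ drop out, and the same decomposition $\E_{\theta_0}[x_t\vert\cdot] = \E_{\theta_0}[e_t\vert\cdot] + \E_{\theta}[x_t\vert\cdot]$ produces the coefficient $C_\theta + \mathcal{F}_y^\epsilon(\theta,t)$.

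The delicate point, and the step I expect to be the main obstacle, is the bookkeeping of the conditioning $\sigma$-algebra: the left-hand sides are conditioned on $y_{1:t}$ whereas the noise-killing argument naturally produces the one-step-ahead (predicted) conditional mean on $y_{1:t-1}$. To close the gap I would argue that the filtered and predicted conditional error means coincide at zeroth order, since at $\epsilon = 0$ one has $\theta = \theta_0$ and the true filter is unbiased, so both $\E_{\theta_0}[e_t \vert y_{1:t}]$ and $\E_{\theta_0}[e_t \vert y_{1:t-1}]$ vanish by the tower property; the innovation update from $y_{1:t-1}$ to $y_{1:t}$ then perturbs the mean only at order $o(\epsilon)$ and may be absorbed into the remainder. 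Keeping uniform control of these $o(\epsilon)$ remainders along the recursion, and checking that the EKF linearisation points introduce no additional first-order contribution, is the part that requires the most care.
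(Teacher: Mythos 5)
Your proposal is correct and follows essentially the paper's own (one-line) argument: take conditional expectations in Eq.(\ref{f1}) and Eq.(\ref{f2}), drop the centred noise blocks $-K_t(\sigma^{\varepsilon}_{\theta}\varepsilon_t+C_{\theta}\sigma^{\eta}_{\theta}\eta_t)+\sigma^{\eta}_{\theta}\eta_t$, $\mathcal{W}_{x}^{\epsilon}$, $\sigma^{\varepsilon}_{\theta}\varepsilon_t$, $\mathcal{W}_{y}^{\epsilon}$, and regroup using $\E_{\theta_0}[x_{t-1}\vert y_{1:t-1}]=\E_{\theta_0}[e_{t-1}\vert y_{1:t-1}]+\E_{\theta}[x_{t-1}\vert y_{1:t-1}]$. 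The conditioning mismatch you flag (the left side conditions on $y_{1:t}$, under which $\varepsilon_t,\eta_t$ are not centred, while the noise-killing step really needs $y_{1:t-1}$) is a genuine subtlety that the paper passes over entirely, so your discussion is if anything more careful than the original, even though your $o(\epsilon)$ absorption of the innovation update is only heuristic.
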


$\newline$

Corollary \ref{coroll_estierorr1} which is just a consequence of Theorem (\ref{prop_ekf}) gives a computable recursive expression of the expected error $\E_{\theta_0}[e_t \vert y_{1:t}]$. Given $\E_{\theta_0}[e_0]$ one can deduce all the values of this expectation for all $t=1,\cdots, T$.\\

\section{Illustration on the linear Gaussian AR(1) model:}\label{exemples}

\subsection{The model}
Let us consider the linear AR(1) model with measurement noise given by:
\begin{equation}\label{AR}
\left\lbrace\begin{array}{ll}
y_t=x_{t}+\sigma_{\varepsilon}\varepsilon_{t}, \qquad t=1,\cdots, T\\
x_{t+1}=\phi_0 x_{t}+\sigma_{\eta}\eta_{t+1}.
\end{array}
\right.
\end{equation}
Since this model is linear and Gaussian we can apply Eq.(\ref{estimation_error}) in Corollary \ref{coroll_estierorr1} to recover the expectation of $e_t$ when the state $x_t$ is estimated with a biased vector of parameters. For this straighforward example, $\theta_0$ is equal to $\phi_0$. For the simulation, we take $\phi_0=0.7$, $\sigma^{2}_{\eta}=0.3$ and $\sigma^{2}_{\varepsilon}=0.5$.

\subsection{Numerical result:} We run a Kalman filter by assuming that the parameter estimate $\phi$ is biased and we take $\phi=0.85$, that is $\epsilon=0.15$. For this model, the functions $b$ and $h$ are given by:
 
\begin{equation*}
b(\theta_0, x)= \phi_{0}x \text{ and } h(\theta_0, x)= x
\end{equation*}
The variable $A_{\theta_0}$ is equal to $\phi_0$ and $C_{\theta_0}$ is equal to one. The control variables $u_t(\theta_0)$ and $d_t(\theta_0)$ are equal to zero.\\
 Furthermore, the functions $\mathcal{E}^{\epsilon}_{x}(\theta, t), \mathcal{F}^{\epsilon}_{x}(\theta, t)$ are easily computable and given in the following lemma.

\begin{lemme} For the linear AR(1) model, the functions $\mathcal{E}^{\epsilon}_{x}(\theta, t)$ and $\mathcal{F}^{\epsilon}_{x}(\theta, t)$ are equal to: \begin{equation*}
\mathcal{E}^{\epsilon}_{x}(\theta, t)=0, \text{ and }  \mathcal{F}^{\epsilon}_{x}(\theta, t)=-\epsilon (1-K_t)
\end{equation*}
Therefore, by using Eq.(\ref{estimation_error}) of Corollary \ref{coroll_estierorr1}, the expectation $\E_{\theta_0}[e_t \vert y_{1:t}]$ is given by

\begin{equation}\label{formAR}
\E_{\theta_0}[e_t \vert y_{1:t}]=\big((1-K_t)(\phi -\epsilon) \big)\E_{\theta_0}[e_{t-1} \vert y_{1:t-1}]-\epsilon (1-K_t)\E_{\theta}[x_{t-1}\vert y_{1:t-1}]+o(\epsilon)
\end{equation}
\end{lemme}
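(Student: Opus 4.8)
The plan is to derive both identities by direct substitution of the AR(1) data into the general expressions (\ref{matrixA1}) and (\ref{matrixA2}) of Theorem \ref{prop_ekf}, and then to feed the results into the recursion (\ref{estimation_error}) of Corollary \ref{coroll_estierorr1}. Since the parameter is here the scalar $\theta_0=\phi_0$, every matrix collapses to a scalar, which keeps the bookkeeping minimal.

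First I would record the model quantities. From $b(\theta_0,x)=\phi_0 x$ and $h(\theta_0,x)=x$ one reads off $A_\theta=\phi$ and $C_\theta=1$, while the control terms $u_t(\theta)$ and $d_t(\theta)$ both vanish. Differentiating with respect to $\theta=\phi$ then gives $\p A_\theta/\p\theta=1$ and $\p C_\theta/\p\theta=0$, together with $\p u_t/\p\theta=\p d_t/\p\theta=0$.

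Next I would substitute these into (\ref{matrixA1}): each of the three summands there carries a factor that vanishes for this model, namely $\p u_t/\p\theta$, $\p d_t/\p\theta$, and $\p C_\theta/\p\theta\,u_t(\theta)$, so $\mathcal{E}^{\epsilon}_x(\theta,t)=0$. Substituting into (\ref{matrixA2}), the second term drops because $\p C_\theta/\p\theta=0$, while the first reduces to $(I_{n_x\times n_x}-K_t C_\theta)\,\p A_\theta/\p\theta=(1-K_t)\cdot 1$; hence $\mathcal{F}^{\epsilon}_x(\theta,t)=-\epsilon(1-K_t)$, as claimed.

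Finally I would insert these two values, together with $(I_{n_x\times n_x}-K_t C_\theta)A_\theta=(1-K_t)\phi$, into (\ref{estimation_error}). The coefficient of $\E_{\theta_0}[e_{t-1}\vert y_{1:t-1}]$ then collapses to $(1-K_t)\phi-\epsilon(1-K_t)=(1-K_t)(\phi-\epsilon)$, the $\mathcal{E}^{\epsilon}_x$ contribution disappears, and the remaining term is $-\epsilon(1-K_t)\E_\theta[x_{t-1}\vert y_{1:t-1}]$, which reproduces (\ref{formAR}). There is no real analytic difficulty here; the only point deserving care is to keep straight that the quantity being differentiated is precisely $A_\theta=\phi$, so that $\p A_\theta/\p\theta=1$ and not $0$, whereas $C_\theta$ is a genuine constant with $\p C_\theta/\p\theta=0$. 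Confusing these two derivatives is the single place where the computation could go astray.
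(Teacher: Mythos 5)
Your computation is correct and is exactly the route the paper takes (the paper treats this lemma as an immediate substitution of $A_\theta=\phi$, $C_\theta=1$, $u_t=d_t=0$ into Eq.~(\ref{matrixA1})--(\ref{matrixA2}) and then into Eq.~(\ref{estimation_error}), without writing out the details). Your explicit bookkeeping of the derivatives $\p A_\theta/\p\theta=1$ and $\p C_\theta/\p\theta=0$ and the resulting factorization $(1-K_t)\phi-\epsilon(1-K_t)=(1-K_t)(\phi-\epsilon)$ is precisely what the paper intends.
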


\begin{figure}[H]
\includegraphics[width=129mm, height=60mm]{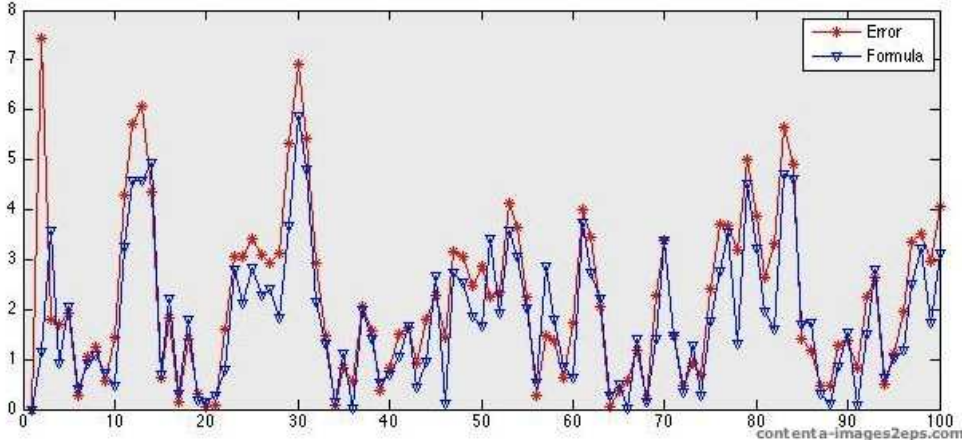}
\caption{\normalsize  Red: True Error $\E_{\theta_0}[e_t\vert y_{1:t}]\times 100$. Blue: Approximation (\ref{formAR})$\times 100$. }
\label{compar_par_phi_kf} 
\end{figure}

$\newline$

This example shows that the approximation (\ref{formAR}) explains the true error $\E_{\theta_0}[e_t \vert y_{1:t}]$ for an easy model. The term $\mathcal{F}_{x}^{\epsilon}(\theta, t)$ corresponds to the bias of $x_t$ induced by the bias of the parameter estimate. Furthermore, we can see that the error between the true expectation $\E_{\theta_0}[e_t \vert y_{1:t}]$ and the approximation (\ref{formAR}) corresponds to $o(\epsilon)$. A full application is given in \cite{salima2}.\\

The following Theorem regards the expression of the variances matrix $\Sigma^x_t$ and $\Sigma^{y}_t$ of $E_t$ and $\mathfrak{E}_t$ respectively.\\
    
\begin{theoreme}\label{prop}

The variance matrix $\Sigma_t^x$ is given by: 

\begin{equation*}
\Sigma_t^x= \begin{pmatrix}
			           V^{x}_{t} & S^{'x}_{t}\\
		 S^{x}_{t} & P^{x}_{t}
			          \end{pmatrix}
\end{equation*}

where:
\begin{eqnarray*}
&&V^{x}_{t} = (I-K_tC_{\hpsi})A_{\hpsi} V^{x}_{t-1}A^{'}_{\hpsi} (I-K_tC_{\hpsi})^{'}+\mathcal{F}_{x}^{\epsilon}(\theta) S_{t-1} (I-K_tC_{\hpsi})A^{'}_{\hpsi} \\
&& \qquad +(I-K_tC_{\hpsi})A_{\hpsi} S^{'x}_{t-1}\mathcal{F}_{x}^{'\epsilon}(\theta)+\mathcal{F}_{x}^{\epsilon}(\theta)  P^{x}_{t-1} \mathcal{F}_{x}^{'\epsilon}(\theta) + \V_{\theta}[\mathcal{\tilde{W}}_{x}^{\epsilon}(\theta)]\\
&&S^{x}_t = A_{\theta_0} S^{x}_{t-1} A^{'}_{\theta_0}(I-K_tC_{\hpsi})^{'}+ A_{\theta_0} P^{x}_{t-1}\mathcal{F}_{x}^{'\epsilon}(\theta)+Cov_{\theta}\left( \mathcal{\tilde{W}}_{x}^{\epsilon}(\theta),\sigma^{\eta}_{\theta_0} \eta_{t} \right)\\
&&P^{x}_t = A_{\theta_0} P^{x}_{t-1} A_{\theta_0}^{'} + Q_{\theta_0} 
\end{eqnarray*}

with:

\begin{equation}\label{vara}
\mathcal{\tilde{W}}_{x}^{\epsilon}(\theta)=\mathcal{W}_{x}^{\epsilon}(\theta)+ \sigma^{\eta}_{\theta}\eta_t-K_t\sigma^{\varepsilon}_{\theta}\varepsilon_t -K_t C_{\theta} \sigma^{\eta}_{\theta}\eta_t
\end{equation}

where $\mathcal{E}_{x}^{\epsilon}(\theta)$, $\mathcal{F}_{x}^{\epsilon}(\theta)$ and $\mathcal{W}_{x}^{\epsilon}(\theta)$ are given in Eq.(\ref{matrixA1}), Eq.(\ref{matrixA2}) and Eq.(\ref{matrixA3}) in Theorem \ref{prop_ekf}.\\

If $\epsilon<<1$, then $\V_{\theta}[\mathcal{\tilde{W}}_{x}^{\epsilon}(\theta)]=Q_{\theta} +K_t\left(C_{\theta}Q_{\theta}C_{\theta}'+R_{\theta}\right)K_{t}'$ and $Cov_{\theta}\left( \mathcal{\tilde{W}}_{x}^{\epsilon}(\theta),\sigma^{\eta}_{\theta_0} \eta_{t} \right)$ is given by:

\begin{equation*}
Cov_{\theta}\left( \mathcal{\tilde{W}}_{x}^{\epsilon}(\theta),\sigma^{\eta}_{\theta_0} \eta_{t} \right)=-\epsilon\left(\frac{\p \sigma^{\eta}_{\theta} }{\p \theta} Q_{\theta_0}\sigma^{'\eta}_{\theta_0}-K_t \left(C_{\theta}\frac{\p \sigma^{\eta}_{\theta} }{\p \theta}+\sigma^{\eta}_{\theta}\right)Q_{\theta_0}\sigma^{'\eta}_{\theta_0}\right)
\end{equation*}
		
Additionally, the variance matrix $\Sigma_t ^{y}$ is given by

\begin{equation*}
\Sigma_t ^{y}=\begin{pmatrix}
			 V_t ^{y} & S ^{'y}_t\\
			 S_t ^{y} & P_t ^{y}
			 \end{pmatrix}
\end{equation*}					

where:
\begin{eqnarray*}
&&V_t ^{y} = C_{\hpsi} V^{x}_{t} C_{\hpsi}^{'} +\mathcal{F}_{y}^{\epsilon}(\theta) S_{t} C_{\hpsi}^{'}+ C_{\hpsi}S^{'}_{t}\mathcal{F}_{y}^{'\epsilon}(\theta)+\mathcal{\tilde{F}}_{y}^{\epsilon}(\theta) P^{x}_{t}\mathcal{\tilde{F}}_{y}^{\epsilon}(\theta)+ \V_{\theta}[\mathcal{\tilde{W}}_{y}^{\epsilon}(\theta)]\\
&&S_t ^{y} = C_{\theta_0}S^{x}_{t} C_{\hpsi}^{'} +C_{\theta_0} P^{x}_{t}\mathcal{F}_{y}^{'\epsilon}(\theta)+Cov_{\theta}\left( \mathcal{\tilde{W}}_{y}^{\epsilon}(\theta),\sigma^{\varepsilon}_{\theta_0}\varepsilon_{t} \right)\\
&&P_t ^{y} = C_{\theta_0} P^{x}_{t} C_{\theta_0}^{'} + R_{\theta_0} 
\end{eqnarray*}

with:

\begin{equation}\label{vara2}
\mathcal{\tilde{W}}_{y}^{\epsilon}(\theta)=\mathcal{W}_{y}^{\epsilon}(\theta)+\sigma^{\varepsilon}_{\theta}\varepsilon_t
\end{equation}
where $\mathcal{E}_{y}^{\epsilon}(\theta)$, $\mathcal{F}_{y}^{\epsilon}(\theta)$ and $\mathcal{W}_{y}^{\epsilon}(\theta)$ are given in Eq.(\ref{matrixC}) Theorem \ref{prop_ekf}.\\

If $\epsilon << 1$, then $\V_{\theta}[\mathcal{\tilde{W}}_{y}^{\epsilon}(\theta)]=R_{\theta}$ and $Cov_{\theta}\left( \mathcal{\tilde{W}}_{y}^{\epsilon}(\theta),\sigma^{\varepsilon}_{\theta_0}\varepsilon_{t} \right)$ is given by:

\begin{equation*}
Cov_{\theta}\left( \mathcal{\tilde{W}}_{y}^{\epsilon}(\theta),\sigma^{\varepsilon}_{\theta_0}\varepsilon_{t} \right)=-\epsilon \left(\frac{\p \sigma^{\varepsilon}_{\theta} }{\p \theta}R_{\theta_0}\sigma^{'\varepsilon}_{\theta_0}\right)
\end{equation*}
\end{theoreme}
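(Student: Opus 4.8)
The plan is to read (\ref{f1}) and (\ref{f2}) of Theorem \ref{prop_ekf} as linear recursions driven by the time-$t$ innovation noise and to produce each block of $\Sigma^x_t$ and $\Sigma^y_t$ by applying the variance/covariance operator term by term. First I would rewrite the error recursion in the compact form
\[
e_t=(I-K_tC_{\theta})A_{\theta}\,e_{t-1}+\mathcal{F}^{\epsilon}_x(\theta,t)\,x_{t-1}+\mathcal{E}^{\epsilon}_x(\theta,t)+\tilde{\mathcal{W}}^{\epsilon}_x(\theta)+o(\epsilon),
\]
where $\tilde{\mathcal{W}}^{\epsilon}_x(\theta)$ from (\ref{vara}) gathers every term carrying the current noises $\varepsilon_t,\eta_t$, and pair it with the state recursion $x_t=u_t(\theta_0)+A_{\theta_0}x_{t-1}+\sigma^{\eta}_{\theta_0}\eta_t$ of (\ref{linearisemodel}). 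The deterministic pieces $\mathcal{E}^{\epsilon}_x$ and $u_t$ drop out of all second-order moments.

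The engine is the Kalman independence structure: $\varepsilon_t,\eta_t$ at time $t$ are independent of $\mathcal{F}_{t-1}$, hence of $e_{t-1}$ and $x_{t-1}$, and independent of one another. Expanding $V^{x}_t=\V[e_t]$ by bilinearity, every cross-covariance between $\tilde{\mathcal{W}}^{\epsilon}_x(\theta)$ and the past vectors $e_{t-1},x_{t-1}$ vanishes, leaving exactly the retained quadratic blocks (the $e_{t-1}$ self-term, the two $e_{t-1}$–$x_{t-1}$ cross-terms through $S_{t-1}$, the $x_{t-1}$ self-term) plus $\V_\theta[\tilde{\mathcal{W}}^{\epsilon}_x(\theta)]$. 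For $S^{x}_t=Cov_{\theta}(x_t,e_t)$ the same cancellations kill everything except the transported term $A_{\theta_0}S^{x}_{t-1}A_{\theta_0}'(I-K_tC_{\theta})'$, the bias term $A_{\theta_0}P^{x}_{t-1}\mathcal{F}^{'\epsilon}_x(\theta)$, and the single surviving noise covariance $Cov_{\theta}(\tilde{\mathcal{W}}^{\epsilon}_x(\theta),\sigma^{\eta}_{\theta_0}\eta_t)$ — nonzero precisely because $\tilde{\mathcal{W}}^{\epsilon}_x$ carries $\eta_t$. In the transported term I would write $A_{\theta}=A_{\theta_0}+O(\epsilon)$ and keep $A_{\theta_0}$ at the retained order, while $P^{x}_t=\V[x_t]=A_{\theta_0}P^{x}_{t-1}A_{\theta_0}'+Q_{\theta_0}$ is just the $\epsilon$-free Lyapunov equation coming from the state recursion alone.

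Next I would evaluate the two noise moments to first order in $\epsilon$. Writing $\tilde{\mathcal{W}}^{\epsilon}_x(\theta)=(I-K_tC_{\theta})\sigma^{\eta}_{\theta}\eta_t-K_t\sigma^{\varepsilon}_{\theta}\varepsilon_t+\mathcal{W}^{\epsilon}_x(\theta)$ with $\mathcal{W}^{\epsilon}_x(\theta)=O(\epsilon)$, the variance is collected from $\V[\sigma^{\eta}_{\theta}\eta_t]=Q_{\theta}$, $\V[\sigma^{\varepsilon}_{\theta}\varepsilon_t]=R_{\theta}$ and $\eta_t\perp\varepsilon_t$, yielding the stated $\V_\theta[\tilde{\mathcal{W}}^{\epsilon}_x(\theta)]$; the covariance $Cov_{\theta}(\tilde{\mathcal{W}}^{\epsilon}_x(\theta),\sigma^{\eta}_{\theta_0}\eta_t)$ is itself $O(\epsilon)$ and comes from retaining only the $\p\sigma^{\eta}_{\theta}/\p\theta$ contributions of $\mathcal{W}^{\epsilon}_x$, giving the displayed $-\epsilon(\cdots)$ term. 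The block $\Sigma^{y}_t$ is obtained identically from (\ref{f2}), rewritten as $\xi_t=C_{\theta}e_t+\mathcal{F}^{\epsilon}_y(\theta,t)x_t+\mathcal{E}^{\epsilon}_y(\theta,t)+\tilde{\mathcal{W}}^{\epsilon}_y(\theta)+o(\epsilon)$ with $\tilde{\mathcal{W}}^{\epsilon}_y$ from (\ref{vara2}), paired with $y_t=d_t(\theta_0)+C_{\theta_0}x_t+\sigma^{\varepsilon}_{\theta_0}\varepsilon_t$, so that $V^{y}_t,S^{y}_t,P^{y}_t$ become bilinear combinations of the already-computed blocks of $\Sigma^{x}_t$.

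I expect the main obstacle to be the bookkeeping of which cross-covariances survive, not any individual computation. In the state block the current noises decouple cleanly from the past; but in the observation block $\xi_t$ and $\tilde{\mathcal{W}}^{\epsilon}_y(\theta)$ share the same innovation $\varepsilon_t$ (through the $-K_t\sigma^{\varepsilon}_{\theta}\varepsilon_t$ contribution hidden inside $e_t$), so one must argue carefully at which order the shared-noise cross-terms are kept or absorbed, while keeping the $\mathcal{F}^{\epsilon}$-weighted and $\p\sigma/\p\theta$ contributions consistently at first order in $\epsilon$. Verifying that the remaining $O(\epsilon^2)$ pieces may legitimately be pushed into $o(\epsilon)$ is what ultimately makes the collected terms match the stated matrices.
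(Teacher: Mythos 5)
Your plan is, for the state block, exactly the paper's own proof: the paper stacks the recursions (\ref{f1})--(\ref{f2}) into the block systems (\ref{prop_sate}) and (\ref{prop_measure}) for $E_t=(e_t,x_t)'$ and $\mathfrak{E}_t=(\xi_t,y_t)'$ and applies $\Sigma_t^x=M\Sigma_{t-1}^xM'+\V[W_t]$, which is your term-by-term bilinear expansion written compactly. For $\Sigma_t^x$ the argument closes as you describe, since $E_{t-1}$ depends only on noises up to time $t-1$ and the driving vector $(\mathcal{\tilde{W}}_{x}^{\epsilon}(\theta),\sigma^{\eta}_{\theta_0}\eta_t)'$ depends only on $\varepsilon_t,\eta_t$; the recursions for $V^x_t$, $S^x_t$, $P^x_t$ and the two noise moments then follow.

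The genuine gap is at the point you flag in your last paragraph and then defer. The observation recursion expresses $\mathfrak{E}_t$ in terms of $E_t$ at the \emph{same} time index plus a noise vector built from $\varepsilon_t$, while $e_t$ itself already contains $-K_t\sigma^{\varepsilon}_{\theta}\varepsilon_t$. Hence $Cov_{\theta}\left(C_{\theta}e_t,\sigma^{\varepsilon}_{\theta}\varepsilon_t\right)=-C_{\theta}K_tR_{\theta}+O(\epsilon)$ and $Cov_{\theta}\left(\sigma^{\varepsilon}_{\theta_0}\varepsilon_t,C_{\theta}e_t\right)=-R_{\theta_0}K_t'C_{\theta}'+O(\epsilon)$: these shared-innovation cross-terms are of order one, not $O(\epsilon)$ or $O(\epsilon^2)$, so they cannot be ``absorbed into $o(\epsilon)$'' as your closing paragraph anticipates, and for the same reason $Cov_{\theta}\left(\mathcal{\tilde{W}}_{y}^{\epsilon}(\theta),\sigma^{\varepsilon}_{\theta_0}\varepsilon_t\right)$ contains the order-one contribution $\sigma^{\varepsilon}_{\theta}\sigma^{'\varepsilon}_{\theta_0}$ rather than being $O(\epsilon)$ as stated. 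To make the $y$-block rigorous one must either carry these cross-terms explicitly into $V^y_t$ and $S^y_t$, or first rewrite the observation error against $E_{t-1}$ (via the quantity $\xi_t^{-}=C_{\theta}A_{\theta}e_{t-1}+\cdots$ computed in the proof of Theorem \ref{prop_ekf}) so that the driving noise really is independent of the conditioning block. Your proposal correctly locates the obstacle but supplies neither fix, and the fix is structural (which time index the recursion is conditioned on), not a matter of counting powers of $\epsilon$; the paper's own appendix sidesteps the issue only by applying the block variance formula as though $(\mathcal{\tilde{W}}_{y}^{\epsilon}(\theta),\sigma^{\varepsilon}_{\theta_0}\varepsilon_t)'$ were uncorrelated with $E_t$.
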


\begin{proof}
See Appendix (\ref{preuves2}).
\end{proof}

$\newline$
The quantities $\mathcal{F}_{x}^{\epsilon}(\theta,t)$  and $\mathcal{W}_{x}^{\epsilon}(\theta,t)$ (resp. $\mathcal{F}_{y}^{\epsilon}(\theta,t)$ and $\mathcal{W}_{y}^{\epsilon}(\theta,t)$) correspond to the correctives terms arising from the bias of the parameters and in particular from the correlation between $e_t$ and the true state $x_t$ (see Eq.(\ref{f1})). This correlation induces a new Lyapunov dynamic equation for the variance matrix $V^{x}_t$. For unbiased parameters estimates, these terms are dropped and a CLT is given in \cite{AbCm94}.\\

\section{Concluding remarks and discussion}
In this paper we provide an expression of the propagation errors on the hidden state for an initial and fixed error on the vector of parameters.\\
We showed that the hidden state $x_t$ appaered in the propagation equation inducing a correlation between $e_t$ and the true state $x_t$ and most importantly a new Lyapunov dynamic equation for the variance matrix. By using the same assumptions than in \cite{AbCm94} and adding smoothness assumptions on the functions $b$ and $h$ and on their derivatives, one can again obtain a CLT for $e_t$. Nevertheless, it is not the subject of this paper.\\
Another remark concerns the case where $\epsilon$ is not fixed and is supposed to be a random variable. This particular case refers to the approach proposed in \cite{MR1847791} for which the parameters are supposed time varying. A dynamical artificial evolution is assumed for $\theta$ such that $\theta_t=\theta_{t-1}+\sigma^{Z}Z$ where $Z$ is a centered and standard gaussian random variable. To the best of our knowledge, there does not exist results about the convergence of this approach. This method fails in practice when the variance $\sigma^{Z}$ is not small. Some authors use $\sigma^{Z}$ decreasing with time. Hence, at each step of the filter, a small perturbation is added to the parameters. This can be seen as a small bias $\epsilon$ introduced at the first step of the filter.

\newpage

\appendix \label{appendice}

\small
\section{\label{preuves1} Proof of Theorem \ref{prop_ekf}:}\quad\\

The proof is essentially based on a one order Taylor expansion of the functions $b$ and $h$ with respect to $\theta$.
\begin{eqnarray}
e_t=x_t-\E_{\theta}[x_t|y_{1:t}]&=&x_t-\E_{\theta}[x_{t}|y_{1:t-1}]-K_t(y_t-\hat{y}_t^{-})\nonumber\\
&=&u_{t}(\theta_0)+A_{\theta_0}x_{t-1}+ \sigma^{\eta}_{\theta_0}\eta_t-\E_{\theta}[u_{t}(\theta)+A_{\theta}x_{t-1}+ \sigma^{\eta}_{\theta}\eta_t\vert y_{1:t-1}]-K_t(y_t-\hat{y}_t^{-})\nonumber\\
&=&u_{t}(\theta_0)+A_{\theta_0}x_{t-1}+ \sigma^{\eta}_{\theta_0}\eta_t-\E_{\theta}[u_{t}(\theta)+A_{\theta}x_{t-1}\vert y_{1:t-1}]-K_t(y_t-\hat{y}_t^{-})\nonumber\\
&=&u_{t}(\theta_0)+A_{\theta_0}x_{t-1}+ \sigma^{\eta}_{\theta_0}\eta_t-u_{t}(\theta)-A_{\theta}\E_{\theta}[x_{t-1}|y_{1:t-1}]-K_t(y_t-\hat{y}_t^{-})\nonumber\\
&=&u_{t}(\theta_0)+A_{\theta_0}x_{t-1}+ \sigma^{\eta}_{\theta_0}\eta_t-u_{t}(\theta)-A_{\theta}\hat{x}_{t-1}-K_t(y_t-\hat{y}_t^{-})\label{erreur_ekf1_inter}
\end{eqnarray}
Note that one can write:

\begin{eqnarray*}
&u_t(\theta_0) = u_t(\theta)-\epsilon \frac{\p u}{\p \theta}(\theta)+o(\epsilon),\quad A_{\theta_0} = A_{\theta}-\epsilon \frac{\p A_{\theta}}{\p \theta}+o(\epsilon),\quad \sigma^{\eta}_{\theta_0}= \sigma^{\eta}_{\theta_0}-\epsilon \frac{\p  \sigma^{\eta}_{\theta}}{\p \theta}+o(\epsilon)
\end{eqnarray*}
Pluging into (\ref{erreur_ekf1_inter}), one gets:
\begin{eqnarray}
e_t=A_{\theta}e_{t-1}-\epsilon\frac{\p u_t}{\p \theta}(\theta)-\epsilon\frac{\p A_{\theta}}{\p \theta}x_{t-1}+ \sigma^{\eta}_{\theta}\eta_t-\epsilon \frac{\p  \sigma^{\eta}_{\theta}}{\p \theta}\eta_t-K_t(y_t-\hat{y}_t^{-})+o(\epsilon)\label{erreur_ekf1}
\end{eqnarray}
Furthermore,

\begin{eqnarray}
(y_t-\hat{y}_t^{-})&=&d_{t}(\theta_0)+C_{\theta_0}x_{t}+ \sigma^{\varepsilon}_{\theta_0}\varepsilon_t-\E_{\theta}[y_{t}|y_{1:t-1}]\nonumber\\
&=&d_{t}(\theta_0)+C_{\theta_0}x_{t}+ \sigma^{\varepsilon}_{\theta_0}\varepsilon_t-d_{t}(\theta)-C_{\theta}\E_{\theta}[x_{t}|y_{1:t-1}]\nonumber
\end{eqnarray}
and 
\begin{eqnarray*}
d_t(\theta_0) = d_t(\theta)-\epsilon \frac{\p d}{\p \theta}(\theta)+o(\epsilon),\quad C(\theta_0) = C(\theta)-\epsilon \frac{\p C}{\p \theta}(\theta)+o(\epsilon),\quad \sigma^{\varepsilon}_{\theta_0}= \sigma^{\varepsilon}_{\theta_0}-\epsilon \frac{\p  \sigma^{\varepsilon}_{\theta}}{\p \theta}+o(\epsilon)
\end{eqnarray*}
So that:

\begin{eqnarray}
(y_t-\hat{y}_t^{-})&=&-\epsilon \frac{\p d_{t}}{ \p \theta}(\theta)+( \sigma^{\varepsilon}_{\theta}-\epsilon \frac{\p  \sigma^{\varepsilon}_{\theta}}{\p \theta})\varepsilon_t + (C_{\theta}-\epsilon \frac{\p C_{\theta}}{\p \theta})\left(u_{t}(\theta_0)+A_{\theta_0}x_{t-1}+ \sigma^{\eta}_{\theta_0}\eta_t \right)\nonumber\\
&&-C_{\theta}\E_{\theta}[x_{t}|y_{1:t-1}]+o(\epsilon)\nonumber
\end{eqnarray}
Rewrite,
\begin{equation*}
\E_{\theta}[x_{t}|y_{1:t-1}] = A_{\theta}\E_{\theta}[x_{t-1}|y_{1:t-1}]+u_t(\theta)
\end{equation*}
we get:
\begin{eqnarray}
&=&C_{\theta}A_{\theta}x_{t-1}-C_{\theta}A_{\theta}\hat{x}_{t-1}+ \sigma^{\varepsilon}_{\theta}\varepsilon_t+C_{\theta} \sigma^{\eta}_{\theta}\eta_t\nonumber\\
&&-\epsilon\left(\frac{\p d_{t}}{ \p \theta}(\theta)+C_{\theta}\frac{\p A_{\theta}}{\p \theta}x_{t-1}+\frac{\p  \sigma^{\varepsilon}_{\theta}}{\p \theta}\varepsilon_t+ C_{\theta}\frac{\p u_{t}}{ \p \theta}(\theta)+C_{\theta}\frac{\p  \sigma^{\eta}_{\theta}}{\p \theta}\eta_t\right.\nonumber\\
&&+\left.\frac{\p C_{\theta}}{\p \theta}u_t(\theta)+\frac{\p C_{\theta}}{\p \theta}A_{\theta}x_{t-1}+\frac{\p C_{\theta}}{\p \theta} \sigma^{\eta}_{\theta}\eta_{t}\right)\nonumber\\
&&+\epsilon^{2}\left(\frac{\p C_{\theta}}{\p \theta}\frac{\p u_t}{\p \theta}(\theta)+\frac{\p C_{\theta}}{\p \theta}\frac{\p A_{\theta}}{\p \theta}x_{t-1}+\frac{\p C_{\theta}}{\p \theta}\frac{\p  \sigma^{\eta}_{\theta}}{\p \theta}\eta_{t}\right)+o(\epsilon) \label{erreur_ekf2}
\end{eqnarray}
Define,

\begin{eqnarray*}
&&\mathcal{E}_{y^{-}}^{\epsilon}(\theta)=-\epsilon \left(\frac{\p d_t}{\p \theta}(\theta)+ C_{\theta}\frac{\p u_t}{\p \theta}(\theta)+\frac{\p C_{\theta}}{\p \theta}u_t(\theta)\right),\\
&&\mathcal{F}_{y^{-}}^{\epsilon}(\theta)=-\epsilon \left(C_{\theta}\frac{\p A_{\theta}}{\p \theta}+\frac{\p C_{\theta}}{\p \theta}A_{\theta}\right),\\
&&\mathcal{W}_{y^{-}}^{\epsilon}(\theta)=-\epsilon \left( \frac{\p C_{\theta}}{\p \theta} \sigma^{\eta}_{\theta}\eta_t+C_{\theta} \frac{\p  \sigma^{\eta}_{\theta}}{\p \theta}\eta_t+\frac{\p  \sigma^{\varepsilon}_{\theta}}{\p \theta}\varepsilon_t\right),
\end{eqnarray*}
we obtain:

\begin{eqnarray*}
\xi_t^{-}&=&y_t-\E_{\theta}[y_t\vert y_{1:t-1}]\nonumber\\
&=&C_{\theta} A_{\theta}e_{t-1}+  \sigma^{\varepsilon}_{\theta}\varepsilon_t+C_{\theta} \sigma^{\eta}_{\theta}\eta_t+\mathcal{E}_{y^{-}}^{\epsilon}(\theta)+\mathcal{F}_{y^{-}}^{\epsilon}(\theta)x_{t-1}+\mathcal{W}_{y^{-}}^{\epsilon}(\theta)+o(\epsilon)
\end{eqnarray*}
By combining Eq.(\ref{erreur_ekf1}) and Eq.(\ref{erreur_ekf2}), we have:

\begin{eqnarray*} 
e_t&=&x_{t}-\E_{\theta}[x_t \vert y_{1:t}]\nonumber\\
&=&(I_{n_x\times n_x}-K_t C_{\theta})A_{\theta}e_{t-1}-K_t  \sigma^{\varepsilon}_{\theta}\varepsilon_t-K_t C_{\theta} \sigma^{\eta}_{\theta}\eta_t+ \sigma^{\eta}_{\theta}\eta_t+\mathcal{E}_{x}^{\epsilon}(\theta)+\mathcal{F}_{x}^{\epsilon}(\theta)x_{t-1}+\mathcal{W}_{x}^{\epsilon}(\theta)+o(\epsilon)
\end{eqnarray*} 
where,

\begin{eqnarray*} 
&&\mathcal{E}_{x}^{\epsilon}(\theta)=-\epsilon \left( (I_{n_x\times n_x}-K_t C_{\theta})\frac{\p u_t}{\p \theta}(\theta)-K_t\frac{\p d_t}{\p \theta}(\theta)-\frac{\p C_{\theta}}{\p \theta}u_t(\theta)\right),\\
&&\mathcal{F}_{x}^{\epsilon}(\theta)=-\epsilon \left( (I_{n_x\times n_x}-K_t C_{\theta})\frac{\p A_{\theta}}{\p \theta}-\frac{\p C_{\theta}}{\p \theta}A_{\theta}\right),\\
&&\mathcal{W}_{x}^{\epsilon}(\theta)=-\epsilon \left( \frac{\p  \sigma^{\eta}_{\theta}}{\p \theta}\eta_t-K_t C_{\theta} \frac{\p  \sigma^{\eta}_{\theta}}{\p \theta}\eta_t-K_t  \sigma^{\eta}_{\theta} \frac{\p C_{\theta_0}}{\p \theta}\eta_t-K_t \frac{\p  \sigma^{\varepsilon}_{\theta_0}}{\p \theta}\varepsilon_t\right),
\end{eqnarray*}
One can deduce the \emph{Propagation of the residues a posteriori:} 

\begin{eqnarray*}
\xi_t&=&y_t-\E_{\theta}[y_t\vert y_{1:t}]\\
&=&d_t(\theta_0)+C_{\theta_0}x_t+ \sigma^{\varepsilon}_{\theta_0}\varepsilon_t-\E_{\theta}[d_t(\theta)+C_{\theta}x_t+ \sigma^{\varepsilon}_{\theta}\varepsilon_t\vert y_{1:t}]\\
&=&d_t(\theta_0)-d_t(\theta)+(C_{\theta}-\epsilon \frac{\p C_{\theta}}{\p \theta})x_t-C_{\theta}\E_{\theta}[x_t\vert y_{1:t}]+( \sigma^{\varepsilon}_{\theta}-\epsilon \frac{\p  \sigma^{\varepsilon}_{\theta}}{\p \theta})\varepsilon_t+o(\epsilon)\\
&=&C_{\theta} e_{t}+  \sigma^{\varepsilon}_{\theta}\varepsilon_t-\epsilon \left(\frac{\p d_t}{\p \theta}(\theta)+\frac{\p C_{\theta}}{\p \theta}x_t+\frac{\p  \sigma^{\varepsilon}_{\theta}}{\p \theta}\varepsilon_t\right)+o(\epsilon)
\end{eqnarray*}

By defining:

\begin{eqnarray*} 
&&\mathcal{E}_{y}^{\epsilon}(\theta)=-\epsilon\frac{\p d_t(\theta)}{\p \theta}\\
&&\mathcal{F}_{y}^{\epsilon}(\theta)=-\epsilon \frac{\p C_{\theta}}{\p \theta}\\
&&\mathcal{W}_{y}^{\epsilon}(\theta)=-\epsilon\frac{\p  \sigma^{\varepsilon}_{\theta}}{\p \theta}\varepsilon_t
\end{eqnarray*}Eq.(\ref{f2}) follows. \qed\\

The proof of Corollary \ref{coroll_estierorr1} is obtained by taking the expectations in Eq.(\ref{f1}) and Eq.(\ref{f2}).

\section{\label{preuves2} Proof of Theorem \ref{prop}:} By using the system (\ref{linearisemodel}) and Eq.(\ref{f1})-Eq.(\ref{f2}) we can rewrite the model as follows: 

\begin{equation}\label{prop_sate}
E_t = \begin{pmatrix}
			 \mathcal{E}_{x}^{\epsilon}(\theta)\\
			 u_t(\theta_0)
			 \end{pmatrix}
			 +
			 \begin{pmatrix}
			 (I-K_tC_{\theta})A_{\theta} &\mathcal{F}_{x}^{\epsilon}(\theta)\\
			 0 & A_{\theta_0}
			 \end{pmatrix}
			 E_{t-1}+
			 \begin{pmatrix}
			\mathcal{\tilde{W}}_{x}^{\epsilon}(\theta) \\
			  \sigma^{\eta}_{\theta_0}\eta_t
			 \end{pmatrix}
\end{equation}	
and,

\begin{equation}\label{prop_measure}
\mathfrak{E}_t= \begin{pmatrix}
			 \mathcal{E}_{y}^{\epsilon}(\theta)\\
			 d_t(\theta_0)
			 \end{pmatrix}
			 +
			 \begin{pmatrix}
			 C_{\theta}& \mathcal{F}_{y}^{\epsilon}(\theta)\\
			 0 & C_{\theta_0}
			 \end{pmatrix}
			 E_{t}+
			 \begin{pmatrix}
			\mathcal{\tilde{W}}_{y}^{\epsilon}(\theta)\\
			  \sigma^{\varepsilon}_{\theta_0}\varepsilon_t
			 \end{pmatrix}
\end{equation}		
Hence, the variance matrix $\Sigma_t^x$ is given by: 

\begin{equation*}
\Sigma_t^x=\begin{pmatrix}
			 (I-K_tC_{\theta})A_{\theta} &\mathcal{F}_{x}^{\epsilon}(\theta)\\
			 0 & A_{\theta_0}
			 \end{pmatrix}\Sigma_{t-1}^x\begin{pmatrix}
			 (I-K_tC_{\theta})A_{\theta} &\mathcal{F}_{x}^{\epsilon}(\theta)\\
			 0 & A_{\theta_0}
			 \end{pmatrix}^{'}+\begin{pmatrix} \V[\mathcal{\tilde{W}}_{x}^{\epsilon}(\theta)] & \C(\mathcal{\tilde{W}}_{x}^{\epsilon}(\theta), \sigma^{\eta}_{\theta_0}\eta_t)\\
\C(\mathcal{\tilde{W}}_{x}^{\epsilon}(\theta), \sigma^{\eta}_{\theta_0}\eta_t) & \sigma^{\eta}_{\theta_0}	 \sigma^{'\eta}_{\theta_0}
\end{pmatrix}		 
\end{equation*} 
Additionally, the variance matrix $\Sigma_t^y$ is given by:

\begin{equation*}
\Sigma_t^y= \begin{pmatrix}
			 C_{\theta}& \mathcal{F}_{y}^{\epsilon}(\theta)\\
			 0 & C_{\theta_0}
			 \end{pmatrix}\Sigma_t^x\begin{pmatrix}
			 C_{\theta}& \mathcal{F}_{y}^{\epsilon}(\theta)\\
			 0 & C_{\theta_0}
			 \end{pmatrix}^{'}+\begin{pmatrix} \V[\mathcal{\tilde{W}}_{y}^{\epsilon}(\theta)] & \C(\mathcal{\tilde{W}}_{y}^{\epsilon}(\theta), \sigma^{\varepsilon}_{\theta_0}\eta_t)\\
\C(\mathcal{\tilde{W}}_{x}^{\epsilon}(\theta), \sigma^{\varepsilon}_{\theta_0}\eta_t) & \sigma^{\varepsilon}_{\theta_0}	 \sigma^{'\varepsilon}_{\theta_0}
\end{pmatrix}		 
\end{equation*} 
Proposition \ref{prop_ekf} gives that:

 \begin{eqnarray*}
\left\lbrace\begin{array}{ll}
\mathcal{W}_{x}^{\epsilon}(\theta)=-\epsilon \left( \frac{\p  \sigma^{\eta}_{\theta_0}}{\p \theta}\eta_t-K_t C_{\theta} \frac{\p  \sigma^{\eta}_{\theta_0}}{\p \theta}\eta_t-K_t  \sigma^{\eta}_{\theta} \frac{\p C_{\theta_0}}{\p \theta}\eta_t-K_t \frac{\p  \sigma^{\varepsilon}_{\theta_0}}{\p \theta}\varepsilon_t\right)\\
\mathcal{W}_{y}^{\epsilon}(\theta)=-\epsilon\frac{\p  \sigma^{\varepsilon}_{\theta_0}}{\p \theta}\varepsilon_t
\end{array}
\right.
\end{eqnarray*}
Hence, if $\epsilon << 1$, then

\begin{eqnarray*}
\V[\mathcal{W}_{x}^{\epsilon}(\theta)]=Q_{\theta}+K_t\left(C_{\theta}Q_{\theta}C_{\theta}^{'}+R_{\theta}\right)K_{t}^{'} \text{ and }\V[\mathcal{W}_{y}^{\epsilon}(\theta)]=R_{\theta}
\end{eqnarray*}
Furthermore, the covariances are given by:

\begin{eqnarray*}
Cov_{\theta}\left( \mathcal{\tilde{W}}_{x}^{\epsilon}(\theta),\sigma^{\eta}_{\theta_0} \eta_{t} \right)&=&-\epsilon Cov_{\theta}\left(\frac{\p \sigma^{\eta}_{\theta_0} }{\p \theta} \eta_t, \sigma^{\eta}_{\theta_0}  \eta_t\right)+\epsilon\left(K_t C_{\theta}\frac{\p \sigma^{\eta}_{\theta_0} }{\p \theta}\eta_t, \sigma^{\eta}_{\theta_0} \eta_t\right)+\epsilon Cov_{\theta}\left(K_t \sigma^{\eta}_{\theta}\eta_t, \sigma^{\eta}_{\theta_0}\eta_t\right)\\
&+&\epsilon Cov_{\theta}\left(K_t \frac{\p \sigma^{\eta}_{\theta_0} }{\p \theta}\varepsilon_t, \sigma^{\eta}_{\theta_0}\eta_t \right)\\
&=&\epsilon \frac{\p \sigma^{\eta}_{\theta_0} }{\p \theta}Q_{\theta_0} \sigma^{'\eta}_{\theta_0}+\epsilon K_t C_{\theta}\frac{\p \sigma^{\eta}_{\theta_0} }{\p \theta}Q_{\theta_0}\sigma^{'\eta}_{\theta_0}+\epsilon K_t \sigma^{\eta}_{\theta}Q_{\theta_0}\sigma^{'\eta}_{\theta_0} \text{ by assumption \textbf{A2}}\\
&=&-\epsilon\left(\frac{\p \sigma^{\eta}_{\theta_0} }{\p \theta} Q_{\theta_0}\sigma^{'\eta}_{\theta_0}-K_t \left(C_{\theta}\frac{\p \sigma^{\eta}_{\theta_0} }{\p \theta}+\sigma^{\eta}_{\theta}\right)Q_{\theta_0}\sigma^{'\eta}_{\theta_0}\right)
\end{eqnarray*}
Additionally,

\begin{eqnarray*}
Cov_{\theta}\left( \mathcal{\tilde{W}}_{y}^{\epsilon}(\theta),\sigma^{\varepsilon}_{\theta_0} \varepsilon_{t} \right)&=&-\epsilon Cov_{\theta}\left(\frac{\p \sigma^{\varepsilon}_{\theta_0} }{\p \theta} \varepsilon_t, \sigma^{\varepsilon}_{\theta_0}  \varepsilon_t\right)\\
&=& -\epsilon \left(\frac{\p \sigma^{\varepsilon}_{\theta_0} }{\p \theta}R_{\theta_0}\sigma^{'\varepsilon}_{\theta_0}\right)
\end{eqnarray*}
\qed

\bibliographystyle{alpha} 
\bibliography{Biblio}

\begin{acknowledgements}
  The author wishes to thank Fr{\'e}d{\'e}ric Patras for his supervisory throughout this paper, Patricia Reynaud-Bouret and N. Chopin for their suggestions and their interest about this framework.
\end{acknowledgements}

\end{document}